\newcommand{\set}[1]{\mathcal{#1}}
\newcommand{\op}[1]{\mathrm{#1}}
\newtheorem{lemma}{Lemma}
\newtheorem{theorem}{Theorem}
\newtheorem{problem}{Problem}
\theoremstyle{definition}
\newtheorem{example}{Example}
\begin{document}

\title{Fuel-Optimal Centralized Coordination of Truck Platooning\\Based on Shortest Paths}
\author{Sebastian van de Hoef, Karl H. Johansson and Dimos V. Dimarogonas
\thanks{The authors are with the ACCESS Linnaeus Center and the School of Electrical Engineering, KTH Royal Institute of Technology, SE-100 44, Stockholm, Sweden. {\tt \{shvdh, kallej, dimos\}@kth.se}. This work was supported by the COMPANION EU project.
}
}

\maketitle

\begin{abstract}
 Platooning is a way to significantly reduce fuel consumption of trucks. Vehicles that drive at close inter-vehicle distance assisted by automatic controllers experience substantially lower air-drag. In this paper, we deal with the problem of coordinating the formation and the breakup of platoons in a fuel-optimal way. We formulate an optimization problem which accounts for routing, speed-dependent fuel consumption, and platooning decisions. An algorithm to obtain an approximate solution to the problem is presented. It first determines the shortest path for each truck. Then, possible platoon configurations are identified. For a certain platoon configuration the optimal speed profile is the solution to a convex program. The algorithm is illustrated by a realistic example. 
\end{abstract}

\section{Introduction}

Truck-platooning means that several trucks drive with small inter-vehicle distance. The distance is maintained by appropriate, automatic control of the vehicle speeds.
Earlier research on truck-platooning has mostly been motivated by the possible increase in traffic throughput \cite{frameworks_overview_article,path_overview_conference,dolphin_framework_conference,Halle_car_platoons_multiagent_auto21,Hall_platoon_sorting}. The potential for fuel saving has renewed the interest in the topic\cite{assad_thesis}. When the spacing between the vehicles is sufficiently small, the follower vehicles experience reduced aerodynamic drag \cite{air_drag_reduction_platooning, Bonnet2000}.
This, in turn, leads to reduced fuel consumption. Advances in related technologies have made the commercial availability of trucks that can platoon likely in near future. 

Much of the research on truck-platooning focuses on the control of the inter-vehicle distances 
\cite{string_stability,barooah_platooning}.
Unless trucks have the same route and start at the same time, platoons have to be created during their journeys. If we use platooning to reduce fuel consumption and only have a small part of vehicles equipped with a platooning system, the formation of the platoons needs to be coordinated over large regions, which is the subject of this work.
Related problems arise for example in air traffic management \cite{air_traffic_management}, convoy movement \cite{convoy_movement_opt}, and other areas  \cite{pinedo2012scheduling,uav_beard,Bullofrazzoli_vehicle_routing}.
Related research on coordination of truck platooning is considered in \cite{jeff_distributed_coordination} where a heuristic scheme based on distributed controllers at road intersections is proposed. The complexity of an off-line optimization algorithm is discussed in \cite{jeff_complexity}. Schemes where platoons are formed on the on-ramps as in \cite{Hall_platoon_sorting} require huge investments in infrastructure. The authors of \cite{datamining_platooning} consider that trucks wait to form platoons and use data-mining techniques. \cite{its_energy_project} computes the global fuel saving opportunities based on a simulation study but does not explicitly detail the means of coordination.

The goal of this paper is to introduce a framework for a fuel-optimal coordination where the trucks adapt their speeds in order to form platoons during their journeys while driving. The influence of speed on the fuel consumption is explicitly considered and trucks are guaranteed to arrive at their destinations by their arrival deadlines.

We consider a number of transport assignments, each with a start position and time and an arrival position and deadline. In Section~\ref{sec:problem_statement} we introduce notation and formulate an optimization problem. The route of a truck is modeled as a path in a graph. The average speed of the trucks is constrained to take into account legal speed limits, performance of the truck, and traffic. Decision variables are the paths taken from start to destination and the speed on each edge. 
We optimize the total fuel consumption for all transport assignments while taking into account both reduced fuel consumption due to platooning as well as speed-dependent fuel consumption. Platooning is considered to take place when two or more trucks traverse an edge at the same time with the same speed. We illustrate why this optimization problem is difficult to solve.
In Section~\ref{sec:approx_sol}, we derive an approximate solution to the problem. The approximate solution is based on the shortest path for each transport assignment, which can be efficiently computed with available algorithms. The remaining selection of edge traversal speeds is solved by identifying platoon configurations taking into account the route and the timing constraints. For a certain platoon configuration, the optimal speeds are computed as the solution to a convex, continuous optimization problem, which can be efficiently solved. We describe how the existence of a solution to the convex problem can be checked for a platoon configuration. By searching through the possible platoon configurations, we obtain optimal speed profiles for the trucks. We show that each two trucks form a platoon at most once during a journey, which can significantly reduce the discrete search space. We demonstrate the algorithm on a realistic scenario.

\section{Problem Statement}
\label{sec:problem_statement}

We consider a road network modeled as a weighted, directed graph $\set{G} = (\set{N},\set{E},W)$, where $\set{N}$ is a set of nodes, $\set{E} \in \set{N} \times \set{N}$ is a set of edges, and $W: \set{E} \rightarrow \mathbb{R}^+$ are positive edge weights. Each edge corresponds to a road segment and the nodes correspond to the intersections. The edge-weights model the length of the road segments. 

We have $K$ transport assignments, each characterized by a start node $n_k^S \in \set{N}$ and a destination node $n_k^D \in \set{N}$. There is a start-time $t_k^S$ at which the transport starts from $n_k^S$ and a deadline $t_k^D$ by which the transport needs to reach $n_k^D$. We consider that each transport assignment is assigned to one truck. The truck carrying out transport assignment $k$ has index $k$, i.e., the same index as the transport assignment.

The speed of a truck is positive and smaller than an allowed maximum speed $v_{\max}$. We assume that the speed over each edge is constant. So, if the speed of a truck over edge $(n_1,n_2)$ is $v$ and the truck starts traversing the edge at time $t$, it arrives at $t + W((n_1,n_2))/v$ at node $n_2$. 

A path of $\set{G}$ is a sequence of nodes $(n_k[1],n_k[2],\dots)$, so that $(n_k[i],n_k[i+1]) \in \set{E}$ for $i = 1,\dots,(|n_k|-1)$, where $|n_k|$ denotes the length of the sequence, i.e., the number of nodes in the path. A path connecting $n_k^S$ and $n_k^D$ has $n_k[1] = n_k^S$ and $n_k[|n_k|] = n_k^D$.

We say that a set of edges $\tilde{\set{E}}$ forms a path if and only if there exists a path $\tilde{n}$ so that $\tilde{\set{E}} = \{(\tilde{n}[i],\tilde{n}[i+1]): i = 1,\dots,(|\tilde{n}|-1)\}$. 
Let $\tilde{n}$ be a path. We say that $\tilde{n}_s$ is a sub-path of $\tilde{n}$ if and only if $\tilde{n}_s$ is a path and there exists $j$ such that for every $i = 1,\dots,(|\tilde{n}_s|-1)$ it holds that $(\tilde{n}_s[i],\tilde{n}_s[i+1]) = (\tilde{n}[j + i],\tilde{n}[j+i+1])$.
We denote the sequence of edges on the path of transport assignment $k$ as $e_k[i] = (n_k[i],n_k[i+1])$ for  $i = 1,\dots,(|n_k|-1)$.

For a path $n_k$, we can collect the edge traversal speeds in a sequence $v_k$ of length $|e_k|$. The $i$th element $v_k[i]$ is the edge traversal speed on edge $e_k[i]$.

The arrival time $t_k[i]$ of a truck $k$ at the $i$th node on its path is the sum of the start time and the edge traversal times up to this point. Given the sequences $n_k$ and $v_k$, we calculate $t_k$ recursively as
\begin{equation}
\begin{split}
 t_k[1] &= t_k^S\\
 t_k[i] &= t_k[i-1] + \frac{W(e_k[i-1])}{v_k[i-1]},
 \; i \in \{2,\dots,|n_k|\}. \label{eq:t_k_def}
\end{split}
\end{equation}
We denote the set of paths as $\set{R} = \{n_1,\dots,n_K\}$ and the set of speed sequences as $\set{V} = \{v_1,\dots,v_K\}$.

\subsection{Fuel Consumption Model}
\label{sec:fuel_model}

We model the fuel consumption per distance traveled as a function of speed $f(v)$. A simple model of fuel consumption per distance is a second order polynomial in the speed:
$
 f(v) = F_r + F_a v^2,
$
where $F_r, F_a$ are positive constants \cite{assad_thesis}. $F_r$ accounts for forces mainly caused by rolling resistance and $F_a v^2$ accounts for aerodynamic forces. For the sake of simplicity, we neglect other factors influencing the fuel consumption such as selected gear, road grade, wind, etc. Many of these influences could be added without fundamentally changing the problem formulation, but they would complicate notation.

When platooning as a follower, i.e., not the first vehicle in the platoon, we assume that the aerodynamic coefficient is reduced to $\eta F_a$ with $0 < \eta \leq 1$, i.e.,
$
 f(v,\eta) = F_r + \eta F_a v^2.
$
We neglect the small fuel savings of the platoon leader, i.e., the first vehicle in the platoon. Note that $\eta = 1$ corresponds to the fuel consumption without platooning. Recent experiments have shown that, typically, $\eta \approx 0.6$ \cite{Bonnet2000}. We say that a group of trucks platoon over edge $(n_1,n_2)$, if the start times of the edge traversal and the speeds on the edge are the same for all trucks in the group. One of the trucks is assigned to be the platoon leader. The coefficients $F_r$ and $F_a$ are assumed to be the same for all trucks. This implies that $\eta$ for the followers does not depend on the platoon leader or on how the trucks are ordered in the platoon. We appoint the truck in the platoon with the highest index the role of the platoon leader.

The complete fuel expense is the sum of the fuel expenses of each truck on each edge on its path:
$
 f_c(\set{R},\set{V}) = \sum\limits_{k = 1}^{K} \sum\limits_{i = 1}^{|e_k|} W\big(e_k[i]\big) f\big(v_k[i],\eta_l(k,i)\big),
$
with
$\eta_l(k,i) = \eta$ if $\exists\, k_l,i_l: k_l > k, e_{k_l}[i_l] = e_k[i], t_{k_l}[i_l] = t_k[i], v_{k_l}[i_l] = v_k[i]$ and $\eta_l(k,i) = 1$ otherwise.
$\set{R}$, $\set{V}$ is the set of paths and the set of speed sequences respectively.
The function $\eta_l$ is equal to $\eta$, if a truck with higher index $k_l$ traverses the same edge $e_{k_l}[i_l] = e_k[i]$ at the same time $t_{k_l}[i_l] = t_k[i]$ with the same speed $v_{k_l}[i_l] = v_k[i]$, and 1 otherwise. Hence, $\eta_l = \eta$ if the current truck gets the role of a platoon follower and then experiences reduced air-drag on this edge, and $\eta =1$ otherwise.

\subsection{Problem Statement}

We are ready to state the optimization problem. For each transport assignment $k$ we want to find a path in $\set{G}$ from $n_k^S$ to $n_k^D$ denoted by $n_k$ and a speed $v_k$ for each edge on the path so that the truck arrives in time, does not violate the speed constraints, and that the total fuel expense $f_c$ is minimized.
\begin{problem}\label{prob:basic_problem}
\begin{subequations}
\begin{align}
 &\min\limits_{ \set{R},\set{V}} f_c(\set{R},\set{V}) \label{eq:opt_prob_obj}\\
 &\;\;\;\op{ s.t. }\nonumber\\
 &\op{for} \; k \in \{1,\dots,K\}\nonumber\\
 &\;\; 0 \leq v \leq v_{\max},\; v \in v_k \label{eq:opt_prob_v_con}\\
 &\;\; n_k[1] = n_k^S,\; n_k[|n_k|] = n_k^D && \label{eq:opt_prob_n_1_con}\\
 &\;\; e_k[j] \in \set{E},\; j \in \{1,\dots,|e_k|\} \label{eq:opt_prob_n_path_con}\\
 &\;\; t_k[1] = t_k^S && \label{eq:opt_prob_t_s}\\  
 &\;\; t_k[|n_k|] \leq t_k^D && \label{eq:opt_prob_t_d_con}
\end{align}
\end{subequations}
\end{problem}
The constraints \eqref{eq:opt_prob_v_con} ensure that the speed $v$ on each edge is positive and below the maximum speed $v_{\max}$. The constraints \eqref{eq:opt_prob_n_1_con} ensure that the paths connect the start and the destination of each transport assignment. The constraints \eqref{eq:opt_prob_n_path_con} ensure that each sequence of node $n_1,\dots,n_k$ describes a path in $\set{G}$, i.e., that between any two successive nodes in $n_k$ there is an edge in $\set{E}$ connecting these nodes. The constraints \eqref{eq:opt_prob_t_s} ensure that each truck starts at $t_k^S$. The constraints \eqref{eq:opt_prob_t_d_con} ensure that each truck arrives in time with $t_k$ given by \eqref{eq:t_k_def}.
It would not be difficult to make $v_{\max}$, $F_r$, and $F_a$ dependent on the edge. We omitted this, however, in order to keep the notation simple. 
In general, Problem~\ref{prob:basic_problem} is not straightforward to solve, since small changes in the constraints can entirely change the routes of the trucks when there are several possible routes of similar length.

\section{Approximate Solution Based on Shortest Paths}
\label{sec:approx_sol}

The problem stated in Section~\ref{sec:problem_statement} is hard to solve exactly \cite{jeff_distributed_coordination,jeff_complexity}. Therefore, we propose here an approximate solution which addresses routing and speed selection in two successive steps. 

In the first step, we determine the paths $\set{R}$ \cite{shortest_path_overview}. We choose $n_k$ as the shortest paths with respect to the weights $W$ between $n_k^S$ and $n_k^D$. We assume that the shortest path for any pair of nodes in $\set{N}$ is unique.

Having computed the path for each truck, the problem remains to select the speeds on each edge of the paths such that $f_c$ is minimized. Without the fuel saving effect of platooning, this would just be the slowest speed for each truck with which it arrives before the deadline $t_k^D$. Taking platooning into account, it might, however, be beneficial to choose higher speeds on some edges, in order to form platoons. There is a non-trivial trade-off between higher fuel consumption due to increased speed and reduced fuel consumption due to platooning. Notice that the speed-independent part of the fuel consumption
$
 \sum\limits_{k = 1}^{K} \sum\limits_{i = 1}^{|e_k|} W(e_k[i]) F_r
$
is fixed, if the paths are fixed, and hence it does not have to be considered when selecting the speeds. 

We formulate the approximate problem with $\set{R}$ ($e_k$) as the shortest routes:
\begin{problem}\label{prob:shortest_path_problem}
\begin{subequations}
\begin{align}
 &\min\limits_{\set{V}} \sum\limits_{k = 1}^{K} \sum\limits_{i = 1}^{|e_k|} W\big(e_k[i]\big) F_a v_k[i]^2\eta_l(k,i) \label{eq:opt_prob_obj_path}\\
 &\;\;\;\op{s.t.}\nonumber\\
 &\op{for}\; k \in \{1,\dots,K\}\nonumber\\
 &\;\; 0 \leq v \leq v_{\max},\; v \in v_k \label{eq:opt_prob_v_con_path}\\
 &\;\; t_k[1] = t_k^S \label{eq:opt_prob_t_s_path}\\  
 &\;\; t_k[|t_k|] \leq t_k^D && \label{eq:opt_prob_t_d_con_path}
\end{align}
\end{subequations}
\end{problem}

We cannot directly apply standard continuous optimization techniques to Problem~\ref{prob:shortest_path_problem} since $f_c(\set{R},\set{V})$ is not continuous in $\set{V}$. Instead, we propose explicitly considering which trucks platoon together over which edges as a discrete decision, which we call \textit{platoon configuration}. Based on this decision, we can formulate a continuous, convex optimization problem with linear constraints, whose solution gives $\set{V}$ in a way that $f_c(\set{R},\set{V})$ is minimal for this particular platoon configuration. In order to find the optimal platoon configuration, we solve a convex program for each feasible platoon configuration and select the best one. 

\subsection{Optimal Speed for a Given Platoon Configuration}

In this section, we will describe how to model the platoon configuration and how to formulate the convex program whose optimal solution are the optimal velocities $\set{V}$ for a given platoon configuration. A program like this can be efficiently solved. In Section~\ref{sec:platooning_posibilities}, we will describe how to search through the possible platoon configurations in order to determine the optimal one.

Firstly, we specify in mathematical terms what we mean by \textit{platoon configuration}. We assign each transport assignment $k$ on each edge of its path $n_k$ a \textit{predecessor}. If the truck is not part of a platoon or if it is a platoon leader on this particular edge, the predecessor is the truck itself. If it is a follower in a platoon, the predecessor is the truck in the platoon with the smallest index higher than $k$. We collect the predecessors for truck $k$ in a sequence $l_k$. 

We reformulate the problem in terms of the traversal times $T_k[i] = W(e_k[i])/v_k[i]$ as opposed to the traversal speeds $v_k$ since this allows us to express all constraints as linear constraints. The choice of $l_k$ determines $\eta_l$ to $\eta_l(k,i) = \eta$ if $l_k[i] \neq k$. We will assume that $l_k$ is chosen so that $\eta_l(k,i) = 1$ if $l_k[i] = k$.
We can write the objective function as
\begin{equation}
 F_a \sum\limits_{k=1}^{K} \sum\limits_{i=1}^{|e_k|} \eta(k,i) \frac{W(e_k[i])^3}{T_k[i]^2}, \label{eq:red_objective_f}
\end{equation}
where we have only kept the speed dependent part of $f_c$ and substituted $v_k[i]^2 = W(e_k[i])^2/T_k[i]^2$. We can also remove $F_a$ from the objective function since it is a constant positive scaling factor in \eqref{eq:red_objective_f}.

Constraint \eqref{eq:opt_prob_v_con_path} is reformulated in terms of $T_k$ as
$
 0 \leq W(e_k[i])/v_{\max} \leq T_k[i].
$
Constraints \eqref{eq:opt_prob_t_s_path} and \eqref{eq:opt_prob_t_d_con_path} turn into
$
 t_k^S + \sum\limits_{i=1}^{|e_k|} T_k[i] \leq t_k^D.
$
Finally, we have to introduce a number of constraints that ensure that the trucks do platoon as specified by $l_k$. Therefore, we have to add a constraint for each truck $k \in \{1,\dots,K\}$ on each link $j \in \{1,\dots,|e_k|\}$ on its path, ensuring that it arrives at the same time at the beginning of the edge as its predecessor: $t_k^S + \sum\limits_{i = 1}^{j-1} T_k[i] = t_{l_k[j]}^S + \sum\limits_{i = 1}^{j_l-1} T_{l_k[j]}[i]$, and that it traverses the edge at the same speed: $T_k[j] = T_{l_k[j]}[j_l]$, where we define $j_l: n_k[j] = n_{l_k[j]}[j_l]$ as the index of node $n_k[j]$ in the path of the predecessor of truck $k$ on edge $e_k[j]$.
We get the following optimization problem for a given platoon configuration $l_1,\dots,l_K$.
\begin{problem}\label{prop:convex_cont_problem}
\begin{subequations}
 \begin{align}
 &\min\limits_{ \{T_1,\dots,T_K\}} \sum\limits_{k=1}^{K} \sum\limits_{i=1}^{|e_k|} \eta(k,i) \frac{W(e_k[i])^3}{T_k[i]^2} \\
 &\;\;\;\mathrm{s.t.}\nonumber\\
 &\mathrm{for} \; k \in \{1,\dots,K\}\nonumber\\
 &\;\; \frac{W(e_k[i])}{v_{\max}} \leq T_k[i], \; i \in \{1,\dots,|e_k|\} \label{eq:cont_opt_v_max_con}\\
 &\;\; t_k^S + \sum\limits_{i=1}^{|e_k|} T_k[i] \leq t_k^D \label{eq:cont_opt_t_d_con}\\
 &\;\; t_k^S + \sum\limits_{i = 1}^{j-1} T_k[i] = t_{l_k[j]}^S + \sum\limits_{i = 1}^{j_l-1} T_{l_k[j]}[i], 
 \; j \in \{1,\dots,|e_k|\} \label{eq:cont_opt_plat_arr_con}\\
 &\;\; T_k[j] = T_{l_k[j]}[j_l],
 \; j \in \{1,\dots,|e_k|\}, \label{eq:cont_opt_plat_trav_con}
\end{align}
\end{subequations}
with $j_l: n_k[j] = n_{l_k[j]}[j_l]$
\end{problem}
We can verify that we have a convex, twice-differentiable objective function on the domain of positive traversal times $T_k[i] > 0: k \in \{1,\dots,K\},\; i \in \{1,\dots,|T_k|\}$ and constraints that are linear in the decision variables $\{T_1,\dots,T_K\}$. 
We provide an example to illustrate the formulation and solution of Problem~\ref{prop:convex_cont_problem}.
\begin{example}
Figure~\ref{fig:example_graph_cont_problem} depicts a small road network and the paths of two trucks. We choose for the platoon configuration that the two trucks platoon from node 3 to 5, which means $l_1=(1,2,2,1,1)$ and  $l_2=(2,2,2,2,2)$.

Figure~\ref{fig:two_truck_cont_opt} shows the solution to this problem. We can see that the two trucks do meet at node 3 and platoon until node 5. Then they split up and have different velocities due to their different deadlines. An interesting observation is that both trucks attain their maximum speed, corresponding to minimum slope in Figure~\ref{fig:two_truck_cont_opt}, while platooning whereas lower speeds are optimal when they drive on their own. This is due to the reduced air drag that can render beneficial to choose high speed while driving in a platoon and low speed otherwise. 
\end{example}

\begin{figure}
  \begin{center}
 \includegraphics[width = 0.8\columnwidth]{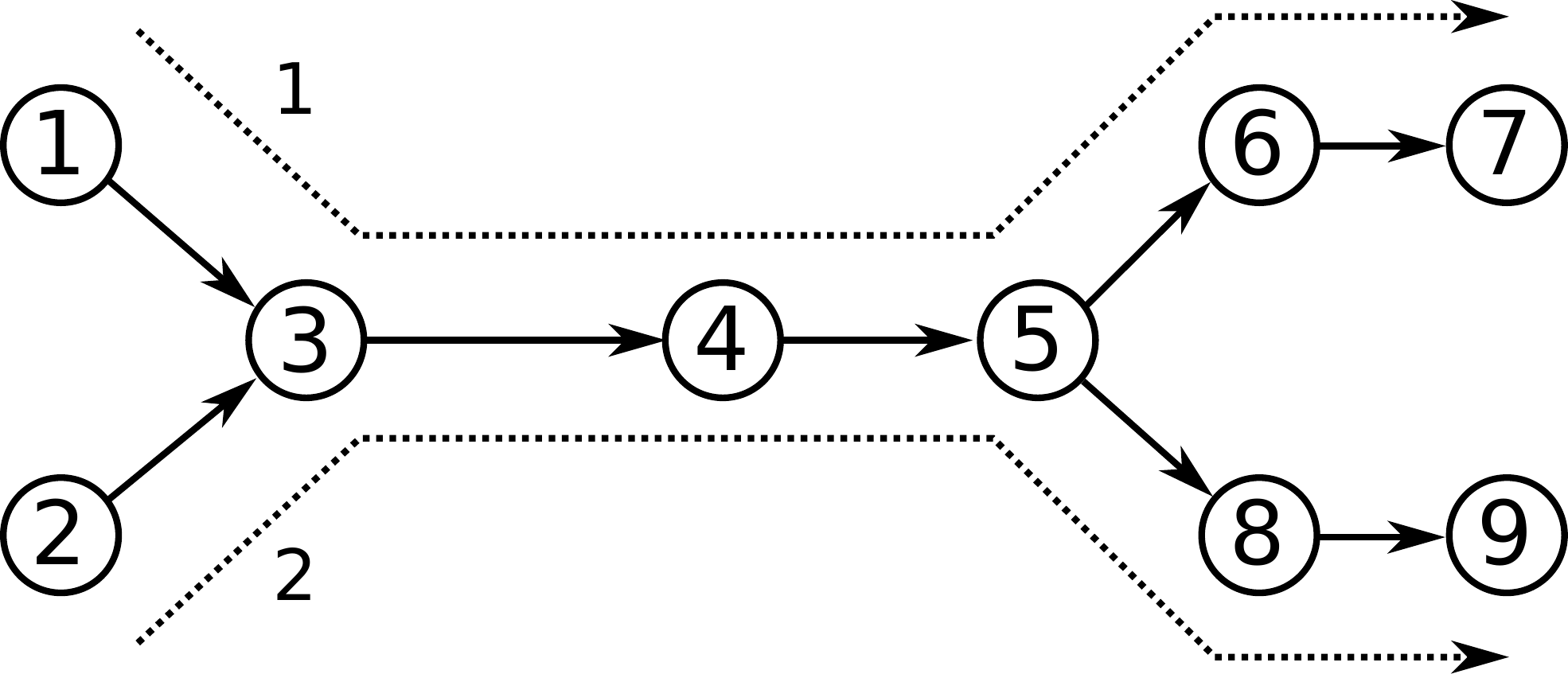}
 \caption{
 Example road network to illustrate Problem~\ref{prop:convex_cont_problem}. The nodes are indicated by circles and directed edges by arrows. The edge between node ``3'' and ``4'' has length 2, all other edges length 1. The dashed arrows indicate the paths of the trucks.
}
\label{fig:example_graph_cont_problem}
\end{center}
\end{figure}

\begin{figure}
  \begin{center}
 \includegraphics[width = 1.\columnwidth]{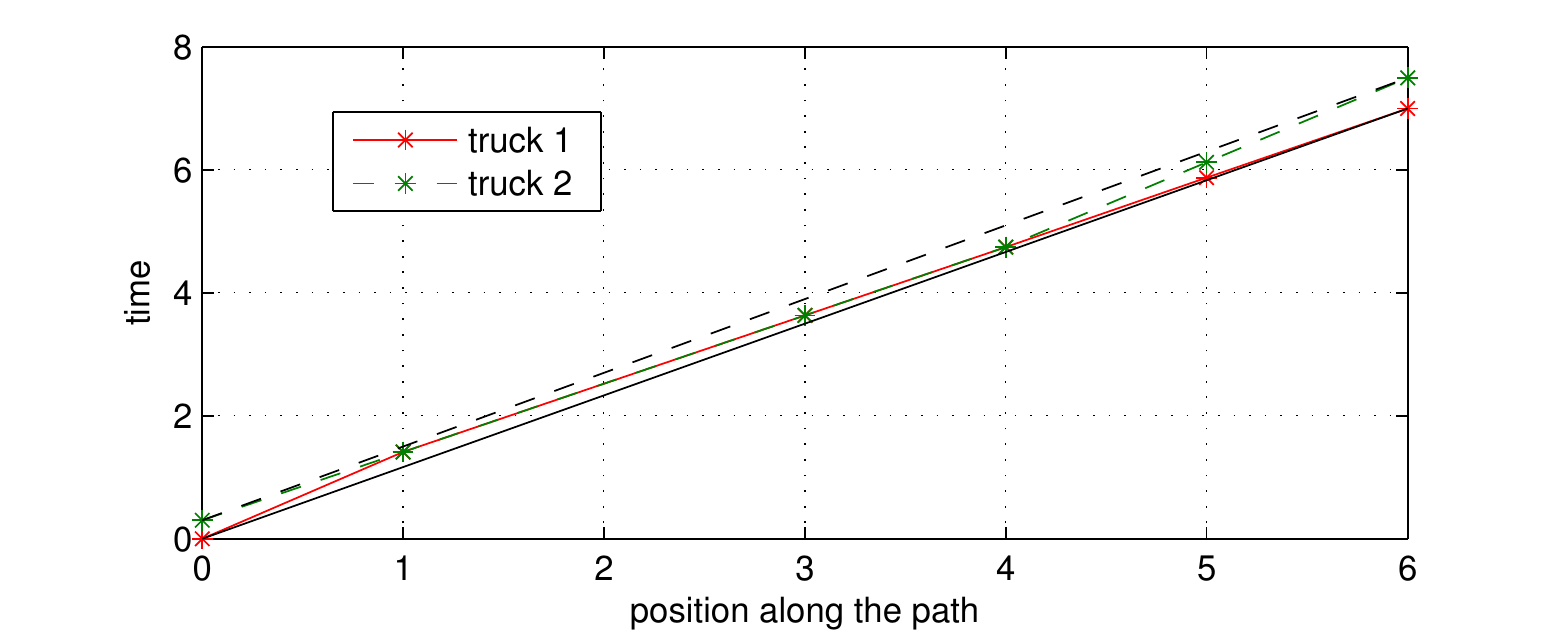}
 \caption{Optimal solution to Problem~\ref{prop:convex_cont_problem} for the road network and the paths shown in Figure~\ref{fig:example_graph_cont_problem}. The time trajectory along the route is shown. Smaller slope corresponds to higher speed. The crosses indicate the nodes' positions. The optimal trajectories without the platooning effect are drawn in black. }
\label{fig:two_truck_cont_opt}
\end{center}
\end{figure}

\subsection{Evaluation of Platooning Possibilities}
\label{sec:platooning_posibilities}

In order to find the optimal platoon configuration, we propose solving Problem~\ref{prop:convex_cont_problem} for different platoon configurations. We first describe how to enumerate all platoon configurations so that Problem~\ref{prop:convex_cont_problem} has a feasible solution. We then establish a property of optimal platoon configurations which further limits the discrete search space. Finally, we provide examples to illustrate the procedure. 

In the first step, we determine which paths share edges. Two trucks can only platoon on common edges on their paths. We define the set of shared edges between two paths $n_{k_1}$, $n_{k_2}$ as
$
n_{k_1} \cap_\set{E} n_{k_2} = \{(n_{k_1}[i],n_{k_1}[i+1]): \exists j: (n_{k_1}[i],n_{k_1}[i+1]) = (n_{k_2}[j],n_{k_2}[j+1]) \}.
$
The following lemma is useful when determining the possible platoon configurations.
\begin{lemma}\label{prop:path_intesection}
If the paths $n_{k_1}$ and $n_{k_2}$ are unique shortest paths in $\set{G}$, then either $n_{k_1} \cap_\set{E} n_{k_2}$ is empty or the path formed by $n_{k_1} \cap_\set{E} n_{k_2}$ also forms a shortest path between two nodes of $\set{G}$.
\end{lemma}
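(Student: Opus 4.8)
The plan is to reduce the statement to the classical \emph{optimal substructure} property of shortest paths and then to show that the shared edges cannot split into two or more disconnected pieces. First I would record the basic fact that every sub-path of a unique shortest path is itself the unique shortest path between its endpoints: if $x$ and $y$ both lie on $n_{k_1}$ with $x$ preceding $y$, then the segment of $n_{k_1}$ from $x$ to $y$ has total weight equal to the shortest-path distance $d(x,y)$, and by the global uniqueness assumption it is \emph{the} shortest $x$--$y$ path. The consequence I would use repeatedly is a \emph{coincidence principle}: if $x$ and $y$ are nodes common to both paths and occur in the same order in $n_{k_1}$ and in $n_{k_2}$, then, being the unique shortest $x$--$y$ path in each, the two segments literally coincide, so every edge between them lies in $n_{k_1} \cap_\set{E} n_{k_2}$.

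With these tools, the core of the argument is to show that the shared edge set, when non-empty, forms a single contiguous sub-path. Since $n_{k_1} \cap_\set{E} n_{k_2}$ is a subset of the edges of the simple path $n_{k_1}$, every node has in- and out-degree at most one in it, so it is automatically a vertex-disjoint union of directed chains, the chains being the maximal runs of consecutive shared edges along $n_{k_1}$. Suppose, for contradiction, that there are at least two such chains; let one chain end at a node $p$ and the next begin at a node $q$, so that $p$ precedes $q$ in $n_{k_1}$ with at least one non-shared edge strictly between them. Both $p$ and $q$ are endpoints of shared edges and hence common nodes. If $p$ also precedes $q$ in $n_{k_2}$, the coincidence principle forces the entire $p$--$q$ segment of $n_{k_1}$ to be shared, contradicting the presence of a non-shared edge in that segment. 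This rules out the same-order configuration and collapses the two chains into one; iterating, the shared edges form one contiguous sub-path of $n_{k_1}$, which by optimal substructure is a shortest path between its two endpoints, as claimed.

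The step I expect to be the genuine obstacle is excluding the \emph{reversed-order} configuration, namely $p$ preceding $q$ in $n_{k_1}$ but $q$ preceding $p$ in $n_{k_2}$. Here the coincidence principle does not apply, since the relevant segments are the shortest $p$--$q$ path in $n_{k_1}$ and the shortest $q$--$p$ path in $n_{k_2}$; their concatenation merely produces a directed cycle of positive total weight, which in a general directed graph is not by itself contradictory. To close this case I would chase the shared edges bordering the gap, namely the last edge entering $p$ and the first edge leaving $q$, both common to the two paths, and compare the induced cycle lengths against the shortest-path distances between $p$, $q$ and these neighbours. I would expect that discharging this case cleanly requires an additional structural hypothesis on $\set{G}$, such as acyclicity or the absence of the short directed cycles that a genuine road network would not contain, since one can otherwise place two unique shortest paths on a single directed cycle so that they share two non-adjacent edges; identifying and invoking that hypothesis is where I anticipate the real work.
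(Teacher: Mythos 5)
Your same-order analysis is, in essence, the paper's entire proof made precise. The paper's argument consists of exactly the two ingredients you start from---optimal substructure (``every sub-path of a shortest path is a shortest path'') plus the uniqueness assumption---and then asserts that if $n_{k_1} \cap_\set{E} n_{k_2}$ does not form a shortest path, there exist sub-paths of $n_{k_1}$ and of $n_{k_2}$ \emph{connecting the same nodes} that differ, contradicting uniqueness. That assertion is precisely your ``coincidence principle,'' and it is valid only when the common nodes occur in the same order along both paths. In other words, the reversed-order configuration you isolate as the genuine obstacle is not treated by the paper at all: its proof tacitly assumes it away. So the gap you could not close is a gap in the paper's own proof, not a shortcoming of your approach relative to it.

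Moreover, your suspicion that the reversed-order case is unprovable without extra hypotheses is correct, and your cycle construction works verbatim. Take $\set{N} = \{1,2,3,4\}$ with $\set{E} = \{(1,2),(2,3),(3,4),(4,1)\}$ and unit weights. Every ordered pair of nodes is joined by exactly one path, so all shortest paths are trivially unique, satisfying the paper's standing assumption. Yet $n_{k_1} = (1,2,3,4)$ and $n_{k_2} = (3,4,1,2)$ are unique shortest paths whose shared edge set $\{(1,2),(3,4)\}$ is non-empty and does not form a path: the lemma as stated is false for general directed graphs. The natural repairing hypothesis is not acyclicity but the symmetry a road network actually has (in the paper's own example ``all edges are bi-directional''): if every edge $(a,b)$ has a companion $(b,a)$ of equal weight, the reversed configuration is impossible. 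Indeed, with shared edges $(u,p)$ and $(q,w)$, a non-shared gap of $n_{k_1}$ from $p$ to $q$, and a segment of $n_{k_2}$ from $w$ to $u$, reversing the relevant sub-paths gives $d(p,q) = d(q,p)$ and $d(u,w) = d(w,u)$; combining this with the sub-path identities $d(q,p) = W((q,w)) + d(w,u) + W((u,p))$ and $d(u,w) = W((u,p)) + d(p,q) + W((q,w))$ forces $2W((u,p)) + 2W((q,w)) = 0$, contradicting positivity of the weights. With that hypothesis added, your two cases together constitute a complete and correct proof---one that is strictly more careful than the one in the paper.
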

\begin{proof}
It is well known that every sub-path of a shortest path is a shortest path \cite{introduction_to_algorithms}. But then the path between any two nodes of $n_{k_1}$ as well as between any two nodes of $n_{k_2}$ is unique. If $n_{k_1} \cap_\set{E} n_{k_2}$ does not form a shortest path there exists a sub-path in $n_{k_1}$ as well as in $n_{k_2}$ connecting the same nodes which is not a sub-path of the path formed by $n_{k_1} \cap_\set{E} n_{k_2}$. Then there are different sub-paths in $n_{k_1}$ and $n_{k_2}$ connecting the same nodes. This, however, contradicts the assumption that the shortest path between two nodes is unique.
\end{proof}
If the paths of two trucks share common edges, we can check if it is possible that the two trucks form a platoon according to the constraints on the maximum speed \eqref{eq:cont_opt_v_max_con} and the before-deadline arrival \eqref{eq:cont_opt_t_d_con}. 

Therefore, we calculate the earliest and latest arrival time at each node of the paths $\set{R}$. The earliest arrival times for the path of truck $k$, denoted by $\underline{t}_k$, can be calculated forward the same way $t_k$ is calculated in \eqref{eq:t_k_def} with $v_k[1] = v_k[2] = \dots = v_k[|v_k|] = v_{\max}$: $\underline{t}_k[1] = t_k^S$ and $\underline{t}_k[i] = t_k[i-1] + W(e_k[i-1])/v_{\max}$ for $i \in \{2,\dots,|n_k|\}$.

In a similar way, we can calculate the latest arrival time $\bar{t}_k$ at each node backward. We have $t_k[1] = t_k^S$. Since there is no immediate constraint on how large $T_k[1]$ can be, the latest arrival time at nodes $n_k[2], n_k[3], \dots$ is only constrained by \eqref{eq:cont_opt_v_max_con} and \eqref{eq:cont_opt_t_d_con}. The latest arrival time $n_k[|n_k|]$ is $\bar{t}_k[|n_k|] = t_k^D$. Then, the latest time the truck can leave at $n_k[|n_k|-1]$ in order to arrive at $n_k[|n_k|]$ before $t_k^D$ is given by $\bar{t}_k[|n_k|] - W(e_k[|n_k|-1])/v_{\max}$. In this way we can work backwards until $n_k[2]$. So, we define $\bar{t}_k$ recursively as $\bar{t}_k[|n_k|] = t_k^D$, $\bar{t}_k[1] = t_k^S$ and $\bar{t}_k[i-1] = \bar{t}_k[i] - W(e_k[i-1])/v_{\max}$ for $i \in \{3,\dots,|n_k|\}$.

If there is a solution to Problem \ref{prob:shortest_path_problem}, we have $\bar{t}_k[i] \geq \underline{t}_k[i]$, and since all elements of $v_k$ can take any value in the interval $[0,v_{\max}]$, the truck can reach any $t_k[i] \in \big[\underline{t}_k[i], \bar{t}_k[i]\big]$ for a particular node $n_k[i]$. Due to \eqref{eq:cont_opt_v_max_con} not every sequence $t_k[i]$ which fulfills $t_k[i] \in \big[\underline{t}_k[i], \bar{t}_k[i]\big]$ has to be feasible. 
Figure~\ref{fig:graph_timining_constraints} illustrates the calculation of the sequences $\underline{t}_k$ and $\bar{t}_k$. The following result can now be stated.

\begin{figure}
\def\svgwidth{.8\columnwidth}
    \begin{center}
    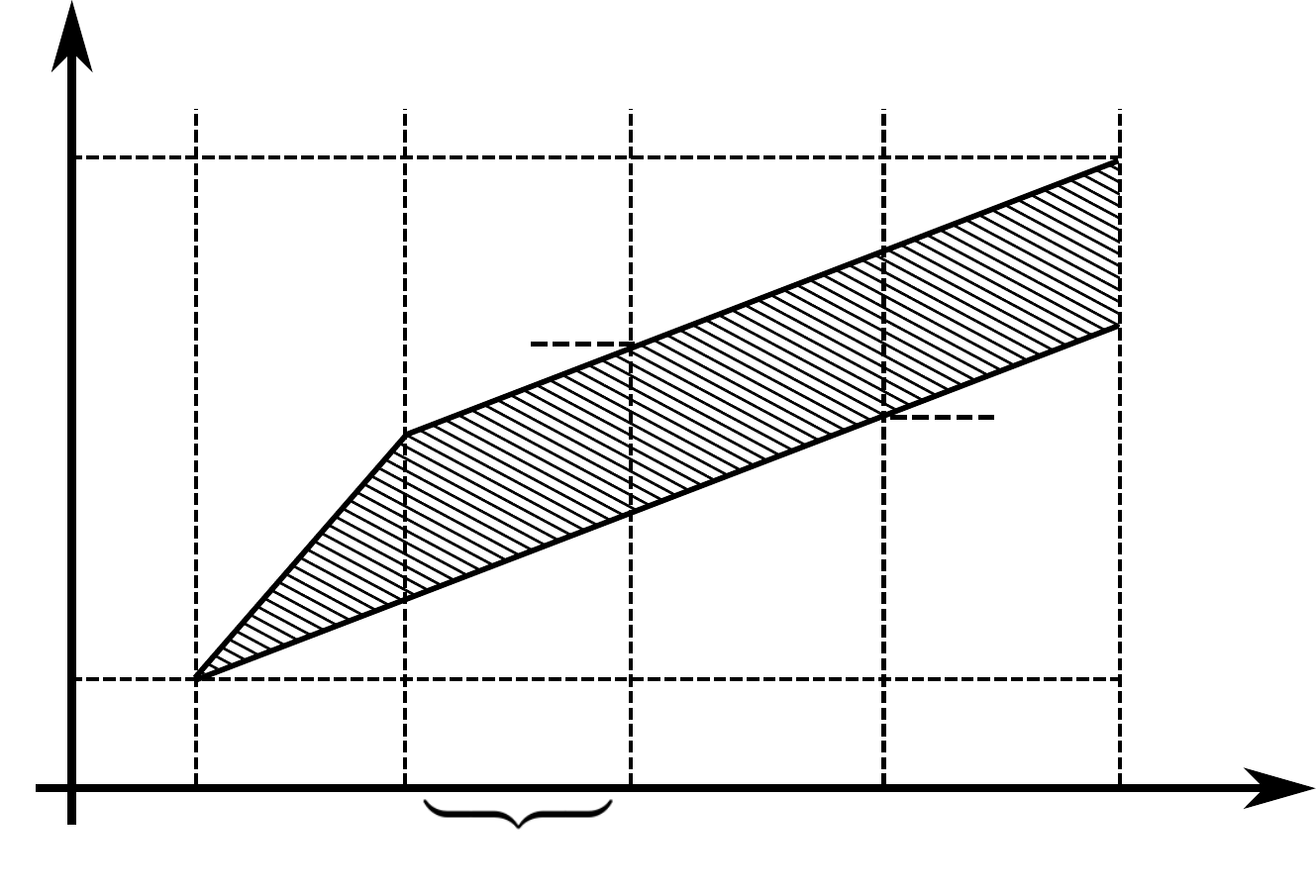
    \end{center}
\caption{Illustration of the calculation of $\bar{t}_1$ and $\underline{t}_1$ for a route with index $k=1$. On the vertical axis, time is plotted over the position along the route. The upper solid line indicates the latest and the lower solid line the earliest feasible arrival time. The position of the nodes is indicated by vertical dashed lines. The distance between two nodes is equal to the weight of the edge connecting these two nodes. As an example, $\bar{t}_1[3]$ and $\underline{t}_1[4]$ are indicated.}
\label{fig:graph_timining_constraints}
\end{figure}

\begin{theorem}\label{prop:possible_l_per_link}
Assume that $k_2 > k_1$ and that there exists a solution to Problem~\ref{prop:convex_cont_problem} without platoon followers, i.e., $l_k[\tilde{j}] = k$ for all $\tilde{j} \in \{1,\dots,|l_k|\}$ and all $k \in \{1,\dots,K\}$. Then, there exists a solution to Problem~\ref{prop:convex_cont_problem} with $k_2 = l_{k_1}[j]$ with $j \in \{1,\dots,|l_{k_1}|\}$ if and only if $\big[\underline{t}_{k_1}[j], \bar{t}_{k_1}[j]\big] \cap \big[\underline{t}_{k_2}[j_l], \bar{t}_{k_2}[j_l]\big] \neq \emptyset$ where $j_l: n_{k_1}[j] = n_{l_{k_1}[j]}[j_l]$. 
\end{theorem}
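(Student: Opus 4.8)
The plan is to prove both implications, leaning throughout on the elementary bracketing fact that \emph{any} feasible arrival time at the $i$th node of truck $k$ must lie in $[\underline{t}_k[i],\bar{t}_k[i]]$. Indeed, since every edge speed is at most $v_{\max}$, truck $k$ cannot reach node $i$ earlier than the all-out run $\underline{t}_k[i]$; and by the backward recursion defining $\bar{t}_k$, the value $\bar{t}_k[i]$ is exactly the latest arrival at node $i$ from which driving the remaining edges at $v_{\max}$ still meets $t_k^D$, so arriving later would violate \eqref{eq:cont_opt_t_d_con}. This bracketing is the shared ingredient for both directions.

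For the ``only if'' direction I would argue as follows. Suppose a feasible solution of Problem~\ref{prop:convex_cont_problem} with $l_{k_1}[j]=k_2$ exists. The arrival-synchronization constraint \eqref{eq:cont_opt_plat_arr_con}, written for $k=k_1$ on edge $j$, forces $t_{k_1}[j]=t_{k_2}[j_l]$. Applying the bracketing fact to both trucks gives $t_{k_1}[j]\in[\underline{t}_{k_1}[j],\bar{t}_{k_1}[j]]$ and $t_{k_2}[j_l]\in[\underline{t}_{k_2}[j_l],\bar{t}_{k_2}[j_l]]$, so their common value is a point lying in both windows, whence the intersection is nonempty.

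For the ``if'' direction I would build an explicit solution of the configuration in which $l_{k_1}[j]=k_2$ and every other predecessor entry is the truck itself. Pick any $\tau$ in the (nonempty) intersection of the two windows and note that, since $e_{k_1}[j]$ is a shared edge, $e_{k_2}[j_l]=e_{k_1}[j]$ carries a single weight $W:=W(e_{k_1}[j])$. Prescribe the following profile for $k_1$: on its first $j-1$ edges it spends exactly $\tau-t_{k_1}^S$, which is feasible because $\tau\geq\underline{t}_{k_1}[j]$ lets me distribute the nonnegative slack $\tau-\underline{t}_{k_1}[j]$ over these edges without exceeding $v_{\max}$; from node $j$ onward (including the shared edge) it runs at $v_{\max}$. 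Use the identical recipe for $k_2$ on its path, and let every other truck retain its profile from the assumed platoon-free solution. By construction both trucks reach the shared node at time $\tau$ and cross the edge at the common speed $v_{\max}$, i.e. $T_{k_1}[j]=T_{k_2}[j_l]=W/v_{\max}$, so \eqref{eq:cont_opt_plat_arr_con} and \eqref{eq:cont_opt_plat_trav_con} hold for the link $l_{k_1}[j]=k_2$ and hold trivially for all self-predecessor entries, where $j_l=j$ makes both sides coincide.

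The one check that carries the weight is the deadline \eqref{eq:cont_opt_t_d_con}, and it follows directly from the definition of $\bar{t}_k$: arriving at node $j$ at $\tau\leq\bar{t}_{k_1}[j]$ and then driving every remaining edge at $v_{\max}$ reaches the destination no later than $t_{k_1}^D$, and symmetrically for $k_2$; the speed bound \eqref{eq:cont_opt_v_max_con} is satisfied since no edge is driven faster than $v_{\max}$. I expect the main obstacle to be this ``gluing'' at the shared node: arbitrary full arrival-time sequences lying inside the windows need \emph{not} be jointly feasible, as noted after the definition of $\bar{t}_k$. The construction sidesteps this precisely because it pins down only a single node time and a single edge speed, so the window bracketing alone guarantees that both the prefix (reaching the shared node at $\tau$) and the suffix (completing the journey at $v_{\max}$) are realizable within the speed and deadline constraints.
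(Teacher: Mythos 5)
Your proof is correct and takes essentially the same approach as the paper's: necessity follows from the bracketing $t_k[i]\in\big[\underline{t}_k[i],\bar{t}_k[i]\big]$ together with the synchronization constraints \eqref{eq:cont_opt_plat_arr_con}--\eqref{eq:cont_opt_plat_trav_con}, and sufficiency is shown by an explicit construction in which both trucks spend their prefix slack so as to meet at a common time $\tau$ in the window intersection and then drive at $v_{\max}$. The only cosmetic difference is that you let the remaining trucks keep their profiles from the assumed platoon-free solution, whereas the paper resets all non-prefix traversal times to $W(e_k[p])/v_{\max}$; both choices close the argument.
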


\begin{proof}
 Constraints \cref{eq:cont_opt_plat_arr_con,eq:cont_opt_plat_trav_con} enforce $t_{k_1}[j] = t_{k_2}[j_l]$ and $t_{k_1}[j+1] = t_{k_2}[j_l+1]$, with $j$ and $j_l$ as in the theorem. On the other hand, we have $t_{k_1}[j] \in \big[\underline{t}_{k_1}[j], \bar{t}_{k_1}[j]\big]$ and $t_{k_2}[j_l] \in \big[\underline{t}_{k_2}[j_l], \bar{t}_{k_2}[j_l]\big]$. So, if $\big[\underline{t}_{k_1}[j], \bar{t}_{k_1}[j]\big] \cap \big[\underline{t}_{k_2}[j_l], \bar{t}_{k_2}[j_l]\big] = \emptyset$, we cannot find a solution to Problem~\ref{prop:convex_cont_problem} with $t_{k_1}[j] = t_{k_2}[j_l]$. This proves necessity. 
 
 Assume $l_k[p] = k$ for $(k,p) \neq (k_1,j),(k_1,j+1)$. If $\big[\underline{t}_{k_1}[j], \bar{t}_{k_1}[j]\big] \cap \big[\underline{t}_{k_2}[j_l], \bar{t}_{k_2}[j_l]\big] \neq \emptyset \wedge \big[\underline{t}_{k_1}[j+1], \bar{t}_{k_1}[j+1]\big] \cap \big[\underline{t}_{k_2}[j_l+1], \bar{t}_{k_2}[j_l+1]\big] \neq \emptyset$, then there exists $T_{k_1}[1], \dots, T_{k_1}[j-1]$ and $T_{k_2}[1], \dots, T_{k_2}[j_l-1]$ such that $\op{max}(\underline{t}_{k_1},\underline{t}_{k_2}) \leq t_{k_1}^S + \sum\limits_{p = 1}^{j-1} T_{k_1}[p] = t_{k_2}^S + \sum\limits_{p = 1}^{j_l-1} T_{k_2}[p] \leq \op{min}(\bar{t}_{k_1},\bar{t}_{k_2})$. Assign now $T_{k}[p] = W(e_{k}[p])/v_{\max}$ for $(k,p) \neq (k_1,1),\dots,(k_1,i-1),(k_2,1),\dots,(k_2,j-1)$. Since the solution for $l_k[i] = k$ exists, these $T_1,\dots,T_K$ fulfill constraints \cref{eq:cont_opt_v_max_con,eq:cont_opt_t_d_con,eq:cont_opt_plat_arr_con,eq:cont_opt_plat_trav_con}. This proves sufficiency. 
\end{proof}
Theorem~\ref{prop:possible_l_per_link} allows us to easily identify possible values for the elements of the sequences $l_k$, $k \in \{1,\dots,K\}$. However, Problem~\ref{prop:convex_cont_problem} does not have a solution for each combination of these values. Therefore, we describe next how to check feasibility for a particular platoon configuration. We do this by iterating over all paths of the trucks and the nodes therein. We calculate the intersection of the feasible arrival time at this node between a truck and its predecessor on the next edge on the truck's path. In each iteration, we set $\big[\underline{t}_{k}[j], \bar{t}_{k}[j]\big]^+ = \big[\underline{t}_{l_k[j]}[j_l], \bar{t}_{l_k[j]}[j_l]\big]^+ = \big[\underline{t}_{k}[j], \bar{t}_{k}[j]\big] \cap \big[\underline{t}_{l_k[i]}[j_l], \bar{t}_{l_k[i]}[j_l]\big]$, where the superscript ``+'' indicates the values after the iteration and $j_l: n_k[j] = n_{l_k[j]}[j_l]$ is the index of node $n_k[j]$ in the path of the predecessor of truck $k$ on edge $e_k[j]$. Then, we propagate constraint \eqref{eq:cont_opt_v_max_con} for the trucks associated with transport assignments $k$ and $l_k[j]$. In this way, we prune the intervals created by $\underline{t}$, $\bar{t}$ until there either is a node where $\big[\underline{t}_{k}[j], \bar{t}_{k}[j]\big] \cap \big[\underline{t}_{l_k[j]}[j_l], \bar{t}_{l_k[j]}[j_l]\big]$ is empty or there is no change after an iteration over all nodes. In the first case, the platoon configuration is not feasible. In the second case, there is a solution to Problem~\ref{prop:convex_cont_problem}. Furthermore, the sequences $t_k, k \in \{1,\dots,K\}$ of the solution lie in the intervals with lower bound $\underline{t}_k$ and upper bound $\bar{t}_k$. 

The above procedure will give us all platoon configurations for which Problem~\ref{prop:convex_cont_problem} has a solution. Recall that we are only interested in platoon configurations where the predecessor of a truck on an edge is the truck in the platoon with next largest index.
However, we want to find the solution to Problem~\ref{prob:shortest_path_problem}. To this end, we can show that in an optimal platoon configuration two trucks will meet and split up at most once.
% \\
\begin{theorem}\label{prop:only_one_split}
If the solution to Problem~\ref{prop:convex_cont_problem} with platoon configuration $l_1,\dots,l_k$ is the unique solution to Problem~\ref{prob:shortest_path_problem}, then $l_1,\dots,l_k$ have the property that for $k_l \neq k$ there exists for each $k \in \{1,\dots,K\}$ at most one pair $(j_1,j_2): j_1 \leq j_2$ such that $k_l \neq l_k[j_1-1], k_l = l_k[j_1]$ and $k_l = l_k[j_2], k_l \neq l_k[j_2+1]$.
\end{theorem}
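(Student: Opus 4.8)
The plan is to argue by contradiction. Suppose the sequence $l_k$ contained two disjoint maximal blocks equal to $k_l$, say on edge indices $[j_1,j_2]$ and $[j_3,j_4]$ with $j_2+1<j_3$. I would first use \cref{prop:path_intesection} to pin down the geometry: every edge on which $k$ follows $k_l$ is common to $n_k$ and $n_{k_l}$, and by the lemma all common edges form a single contiguous sub-path. Since both blocks lie inside this sub-path, every edge of $n_k$ strictly between the two blocks (the \emph{gap}) is also common to both paths, so on the gap the two trucks traverse exactly the same sequence of edges. Moreover, the platooning constraints \cref{eq:cont_opt_plat_arr_con,eq:cont_opt_plat_trav_con} force $t_k = t_{k_l}$ at the node that ends the first block and at the node that starts the second block; hence both trucks enter and leave the gap at the same instants and therefore spend the same total time $\Delta t$ traversing the identical gap edges.

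The heart of the proof is a local exchange argument showing this cannot be optimal. On the gap the two trucks cover the same lengths $W(e)$ in the same total time $\Delta t$, each subject only to the speed bound \cref{eq:cont_opt_v_max_con}. I would replace their gap speed profiles by a \emph{common} profile $T(\cdot)$ that minimizes $\sum_e W(e)^3/T(e)^2$ subject to $\sum_e T(e) = \Delta t$ and \cref{eq:cont_opt_v_max_con}; call this minimum $C^\ast$. Because the per-edge cost $W(e)^3/T(e)^2$ in \eqref{eq:red_objective_f} is strictly convex in $T(e)$, each truck's separate gap cost is at least $C^\ast$, so (when both travel alone on the gap) their separate contribution is at least $2C^\ast$, whereas after merging one of them becomes a follower and the two contributions total $(1+\eta)C^\ast$. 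Since $0<\eta\le 1$, merging does not increase the fuel expense, and for $\eta<1$ it strictly decreases it by $(1-\eta)C^\ast>0$. Because the times at the gap boundaries are held at their original feasible values, the deadline constraint \cref{eq:cont_opt_t_d_con}, all other constraints of $k$ and $k_l$, and every other truck's trajectory remain untouched, so the modified speeds are still feasible for \cref{prob:shortest_path_problem}.

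This yields the contradiction. If the merged profile is strictly cheaper, $\set{V}$ was not optimal for \cref{prob:shortest_path_problem}. If it is only cost-neutral (the borderline $\eta=1$, or a gap on which the two profiles already coincide), then either the modified profile is a second optimum, contradicting that $\set{V}$ is the \emph{unique} solution, or the two trucks already travel identically on the gap, in which case they are at the same place, at the same time, and at the same speed there, i.e. they actually platoon, contradicting $k_l\neq l_k[j]$ on the gap. Hence $k_l$ can occur as at most one contiguous block in $l_k$, which is exactly the stated property.

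The step I expect to be most delicate is guaranteeing that the merge never raises the cost of a \emph{third} truck. If some truck $m$ platoons with $k$ or with $k_l$ somewhere on the gap, then altering the gap speeds of $k$ and $k_l$ can break or change that platoon. I would handle this by merging whole synchronized platoons rather than single trucks: any truck that platoons with $k$ (resp.\ $k_l$) across the entire gap is, through the chain of constraints \cref{eq:cont_opt_plat_arr_con,eq:cont_opt_plat_trav_con}, also synchronized at both gap boundaries, so it can be carried into the common profile, and the same convexity computation then replaces several leaders by followers and still gives a strict decrease of $(1-\eta)C^\ast$ per merged leader. Making this bookkeeping fully rigorous, in particular ruling out a follower that joins $k$ only on part of the gap and could otherwise lose synchronization at the far boundary, is the main technical obstacle, and is where the strict convexity of \eqref{eq:red_objective_f} together with the contiguity granted by \cref{prop:path_intesection} must be combined with care.
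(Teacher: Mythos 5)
Your proposal follows essentially the same route as the paper's proof: invoke Lemma~\ref{prop:path_intesection} so that the shared edges form one contiguous sub-path, note that constraints \eqref{eq:cont_opt_plat_arr_con}--\eqref{eq:cont_opt_plat_trav_con} pin both trucks to identical times at the two gap boundaries, and then run an exchange (merge) argument on the gap that contradicts unique optimality. The substantive difference is how the exchange is performed. You synthesize a new common gap profile that is optimal for the time budget $\Delta t$ and compare $2C^\ast$ against $(1+\eta)C^\ast$; the paper instead takes whichever of the two \emph{existing} gap profiles has smaller cost $\sum_m W(\tilde{e}[j_1-1+m])\,v[m]^2$ (without loss of generality that of $k_1$) and simply copies it onto the other truck. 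The paper's variant buys two things: it needs no auxiliary optimization, and its per-edge accounting survives the case where the rewritten truck was already a follower of some third truck on the gap, since its old per-edge cost is then $\eta v_{k_2}[m]^2$ and the change is still bounded by $\eta\big(v_{k_1}[m]^2 - v_{k_2}[m]^2\big)$, summing to a nonpositive total. That is precisely the case in which your bound ``separate contribution $\geq 2C^\ast$'' breaks down (the pre-merge cost can be as low as $2\eta C^\ast$), which is why you are forced into the heavier whole-platoon-merging bookkeeping of your last paragraph. Conversely, the obstacle you flag as the main remaining difficulty --- a third truck whose platoon with $k$ or $k_l$ covers only part of the gap and is destroyed by the rewrite --- is a genuine issue, but the paper's own proof does not treat it at all; your explicit handling of the degenerate cases (the $\eta = 1$ borderline and the case where the two gap profiles already coincide, which under the objective of Problem~\ref{prob:shortest_path_problem} means the trucks are de facto platooning) is also more careful than the published argument. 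So your approach is sound and matches the paper's in structure, but the paper's cheaper-profile-copying trick is the more robust way to execute the exchange step.
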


\begin{proof}
 Let $\tilde{n}$ be the path formed by $n_{k_1} \cap_\set{E} n_{k_2}$ and let $\tilde{e}$ be the corresponding sequence of edges. We know from Lemma~\ref{prop:path_intesection} that $\tilde{n}$ will indeed be a path. Assume there is a unique optimal solution with $v_{k_1}$, $v_{k_2}$ to Problem~\ref{prob:shortest_path_problem} where the above statement does not hold, i.e., assume there exists $j_1,j_2: j_2 > j_1$ so that trucks $k_1$, $k_2$ platoon on $(\tilde{n}[j_1-1],\tilde{n}[j_1])$ and on $(\tilde{n}[j_2],\tilde{n}[j_2+1])$ but not on the edges between $\tilde{n}[j_1]$ and $\tilde{n}[j_2]$. Then, $k_1$, $k_2$ could also platoon on the edges between $\tilde{n}[j_1]$ and $\tilde{n}[j_2]$. Denote the speed profile of truck $k_1$, $k_2$ on these edges as $v_{k_1}$, $v_{k_2}$ respectively. 
 Assume without loss of generality $\sum\limits_{m} W(\tilde{e}[j_1-1+m])v_{k_1}[m]^2 \leq \sum\limits_{m} W(\tilde{e}[j_1-1+m]) v_{k_2}[m]^2$. We assign truck $k_2$ on these nodes the speed profile of truck $k_1$ which implies they platoon on these edges. For each edge $m$ there are two cases: $k_2$ was part of another platoon. Then the change in the contribution of this edge to the objective function is $\eta(v_{k_1}[m]^2 - v_{k_2}[m]^2)$. If $k_2$ was not part of a platoon then the change for this edge in the objective function is $\eta v_{k_1}[m]^2 - v_{k_2}[m]^2 \leq \eta(v_{k_1}[m]^2 - v_{k_2}^2)$. So, the change in the objective function from the old to the new speed profile is upper bounded by $\eta \sum\limits_{m} W(\tilde{e}[j_1-1+m])(v_{k_1}[m]^2 - v_{k_2}[m]^2) \leq 0$ which contradicts the assumption that $v_{k_1}$, $v_{k_2}$ are part of a unique optimal solution.
\end{proof}

In the next example, we illustrate how our method is applied to a realistic scenario.
\begin{example}
We have four transport assignments. The road network $\set{G}$ is shown in Figure~\ref{fig:companion_problem_map}. It is a handmade abstraction of an area in the east of Hungary and the west of Romania. The weights are in kilometers. 

We set $v_{\max} = 90$ and $\eta = 0.6$. The start times are set to $t_1^S = 0.2,t_2^S=-0.3,t_3^S=0.75,t_4^s=-0.05$. The arrival deadlines are set to the arrival time when traveling the whole path with a speed of $80$. This creates 340 valid platoon configurations. The result of the optimization is shown in Figure~\ref{fig:companion_problem_result_3d}. The best configuration turns out to be $l_1 = (1, 1, 2, 2, 2, 2, 1)$, $l_2 = (2, 2, 2, 2, 2, 3, 2, 2, 2)$, $l_3 = (3, 3, 4, 4)$, $l_4 = (4, 4, 4)$. It is interesting to note that trucks 1 and 2 (blue and green) do not platoon on the edges $(2,3)$ and $(16,18)$, even though they could according to the constraints. The value of the objective function is 5\,\% lower than the solution where no coordination takes place, i.e., the air drag coefficient is in average reduced by 5\,\%. If more favorable starting times are chosen, the reduction can be much bigger.

\begin{figure}
  \begin{center}
 \includegraphics[width = .98\columnwidth]{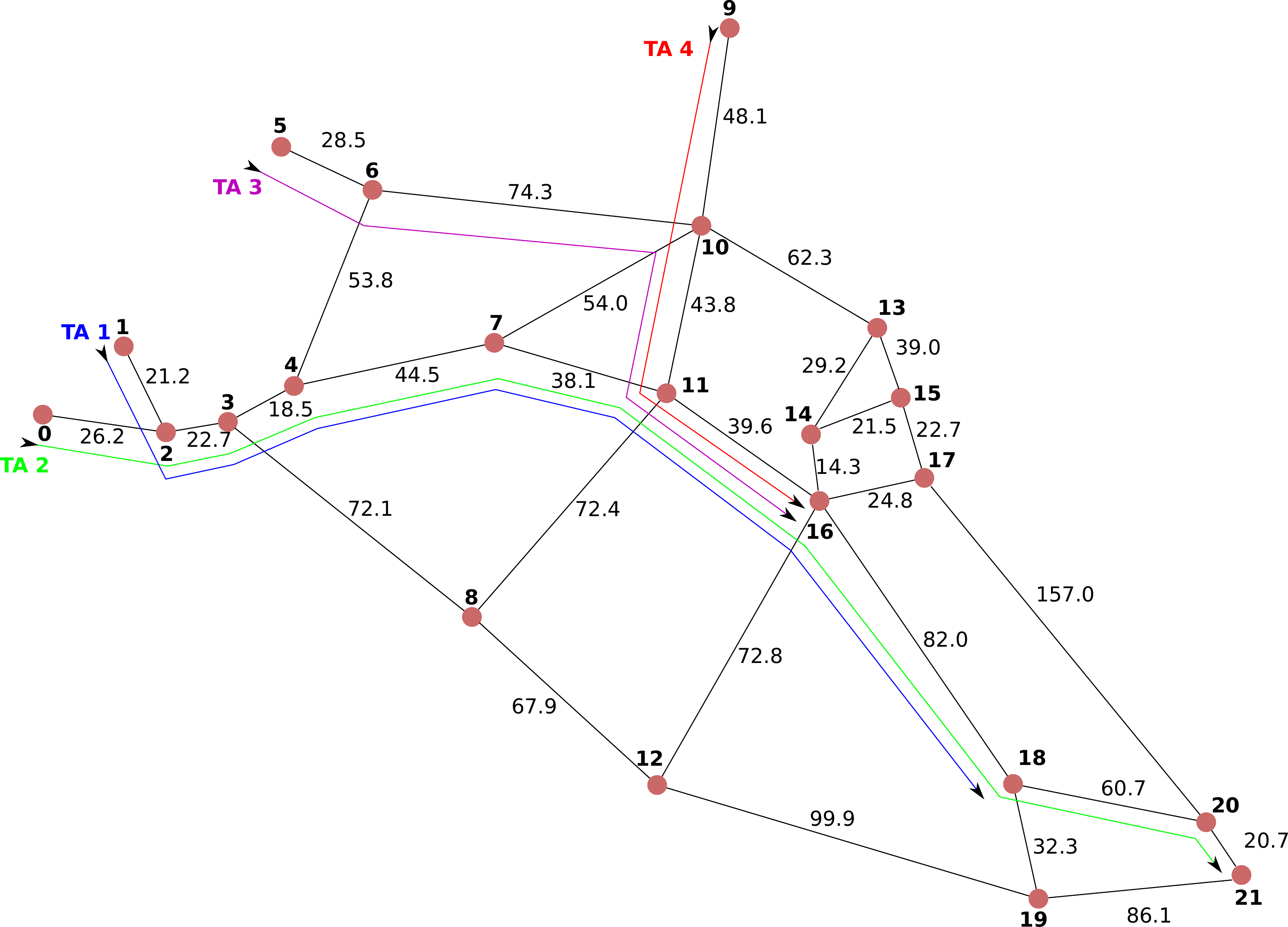}
 \caption{Road-network to illustrate the approximate solution based on shortest paths. The nodes are shown as red dots and the edges as black lines. All edges are bi-directional. The index of a node is placed next to the node and the weight of an edge is placed next to the edge. The shortest paths for the four transport assignments are indicated in color. At the beginning of each path, ``TA'' and the index of the transport assignment are shown.}
\label{fig:companion_problem_map}
\end{center}
\end{figure}

\begin{figure}
  \begin{center}
 \includegraphics[width = .8\columnwidth]{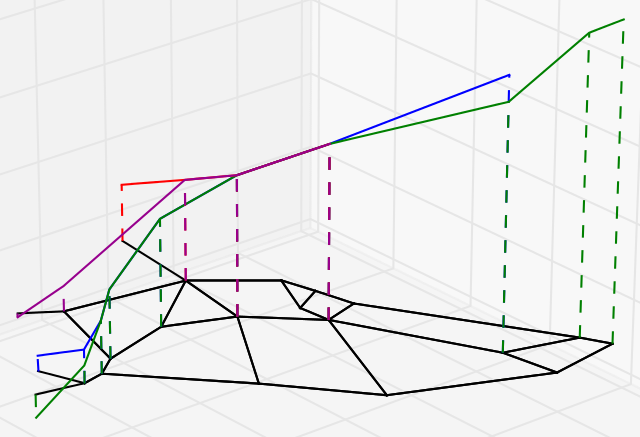}
 \caption{Optimization result of the approximate solution based on shortest paths. In black, the road network of Figure~\ref{fig:companion_problem_map} is shown. The vertical axis indicates the sequences $t_k$ for the four routes at the nodes. The colors are as in Figure~\ref{fig:companion_problem_map}. The dashed lines indicate to which node each data point belongs.}
\label{fig:companion_problem_result_3d}
\end{center}
\end{figure}
 
\end{example}

\section{Conclusion}

In this paper, we have investigated the problem of how to plan the speed of trucks by exploiting the possibility of forming platoons. We ensured that speed limitations and arrival deadlines are not exceeded. We proposed an approximate solution, where each truck takes its shortest path, yielding a number of convex optimization problems. We demonstrated our solution on a realistic scenario.

We made a number of simplifications that would need to be considered in a real world deployment. Some of them can be easily tackled, such as different speed levels per edge and fuel consumption depending on the edge. Other factors, such as unknown traffic conditions and mandatory rest times for the driver, might be more difficult to handle and will be subject to future work. Furthermore, we will investigate how to decrease complexity. The complexity of the algorithm can be problematic in scenarios where many trucks can platoon. 

\vspace{\stretch{1}}

\bibliography{citations}
\bibliographystyle{IEEEtran}

\end{document}